\definecolor{refcolor}{RGB}{0,0,190}
\theoremstyle{definition}
\newtheorem{definition}{Definition}[section]
\newtheorem{example}[definition]{Example}
\theoremstyle{plain}
\newtheorem{lemma}[definition]{Lemma}
\title{On the wavefunction collapse*}
\author{Ovidiu Cristinel Stoica**}
\thanks{*Published in Quanta, DOI: \href{http://dx.doi.org/10.12743/quanta.v5i1.40}{http://dx.doi.org/10.12743/quanta.v5i1.40}
\\ **Department of Theoretical Physics, National Institute of Physics and Nuclear Engineering -- Horia Hulubei, Bucharest, Romania. Email: \href{mailto:cristi.stoica@theory.nipne.ro}{cristi.stoica@theory.nipne.ro},  \href{mailto:holotronix@gmail.com}{holotronix@gmail.com}}
\date{\today}
\begin{document}

\begin{abstract}
Wavefunction collapse is usually seen as a discontinuous violation of the unitary evolution of a quantum system, caused by the observation. Moreover, the collapse appears to be nonlocal in a sense which seems at odds with General Relativity. In this article the possibility that the wavefunction evolves continuously and hopefully unitarily during the measurement process is analyzed. It is argued that such a solution has to be formulated using a time symmetric replacement of the initial value problem in Quantum Mechanics. Major difficulties in apparent conflict with unitary evolution are identified, but eventually its possibility is not completely ruled out. This interpretation is in a weakened sense both local and realistic, without contradicting Bell's theorem. Moreover, if it is true, it makes Quantum Mechanics consistent with General Relativity in the semiclassical framework.
\end{abstract}

\maketitle

\section{Introduction}
\label{s:intro}

\subsection{Unitary evolution and wavefunction collapse}
\label{s:standardQM}

The state of a quantum system is represented by a vector $|{\psi}\rangle$ in a Hilbert space $\mathcal{H}$. Its evolution is governed by the Schr\"odinger equation, 
\begin{equation}
\label{eq_schrod}
\Bigg\{
\begin{array}{ll}
	\imath\hbar\frac{\partial}{\partial t}|{\psi(t)}\rangle=\hat H(t)|{\psi(t)}\rangle \\
	|{\psi(t_a)}\rangle=|{\psi_a}\rangle
\end{array}
\end{equation}
where $\hat H(t)$ is the Hamiltonian, which is a Hermitian operator on $\mathcal{H}$. If the quantum system is closed, then $\hat H$ is time independent.
The solutions of the Schr\"odinger equation have the form
\begin{equation}
\label{eq_unitary_evolution}
|{\psi(t_b)}\rangle=\hat U(t_b,t_a)|{\psi(t_a)}\rangle
\end{equation}
where $\hat U(t_b,t_a)$ is a unitary operator on $\mathcal{H}$, given by:
\begin{equation}
\label{eq_unitary_evolution_operator}
\hat U(t_b,t_a) = \mathcal T\left(e^{-\frac{\imath}{\hbar} \int_{t_a}^{t_b}\hat H(t)\operatorname{d} t}\right)
\end{equation}
where $\mathcal T$ is the \emph{time ordering operator}, needed because the Hamiltonians at different times might not commute.
In the case of time independent $\hat H$,
\begin{equation}
\label{eq_unitary_evolution_operator_time_independent}
\hat U(t_b,t_a)=e^{-\frac{\imath}{\hbar} \left(t_b-t_a\right)\hat H}
\end{equation}

\emph{Observables} are represented by Hermitian operators $\hat {\mathcal O}$ on the Hilbert space $\mathcal{H}$. The outcome of a measurement is an eigenvalue $\lambda\in\mathbb{R}$ of $\hat {\mathcal O}$, and the state of the observed system is an eigenstate $|{\lambda}\rangle$ of $\hat {\mathcal O}$ corresponding to $\lambda$. The probability density that a quantum system previously in the state $|{\psi}\rangle$ is found to be in the eigenstate $|{\lambda}\rangle$ is, according to the \emph{Born rule}, $|\langle{\lambda}|{\psi}\rangle|^2$.
In particular, if $|{\psi}\rangle$ represents a single particle, then according to the \emph{Born rule}, the probability density that the particle is detected at a time $t_a$ at the position $x_a\in\mathbb{R}^3$ is
\begin{equation}
\label{eq_Born_rule_position}
\text{P}_{(x_a,t_a)}=|\langle{x_a}|{\psi(t_a)}\rangle|^2
\end{equation}
where $|{x_a}\rangle$ is the eigenstate of the position operator $\hat x$ corresponding to the position $x_a$, so that $\langle{x}|{x_a}\rangle$ is equal to the Dirac distribution $\delta(x-x_a)$.

Of course, after the particle was detected at the time $t_a$ at the position $x_a$, the probability to find it elsewhere vanishes, so the wavefunction changes --  we say it \emph{collapsed} at the position $x_a$.

The collapse specified by the Born rule suggests that the wavefunction is merely a tool for calculating the probabilities.
On the other hand, Quantum Mechanics (QM) describes everything -- particles, atoms, molecules, hence all material objects -- as wavefunctions, so are they merely probabilistic waves?

The notion of discontinuous collapse has to face some problems. First, how can the Schr\"odinger equation, so successfully confirmed, be accommodated with the apparent wavefunction collapse? How can we reconcile a collapse taking place simultaneously everywhere in space, with Relativity, which does not accept the notion of absolute simultaneity?

On the other hand, trying to replace it with an effect resulting from dynamics also encounters severe difficulties, some of which will be explored here.

\subsection{Motivation}

In this article, I am interested in exploring the possibility that the dynamics of quantum systems, governed by the Schr\"odinger equation, can take place without discontinuous collapse, even during measurements. Hopefully we can find out that it can evolve by Schr\"odinger's equation, but maybe we need a general relativistic version, or at least an approximation like the non-linear Schr\"odinger--Newton equation \cite{RuffiniBonazzola1969SchrodingerNewtonEquation} (non-linear modifications of the Schr\"odinger equation of the type studied by Weinberg are known to be signaling \cite{gisin1990weinbergSignaling}, but it is not excluded that other non-linear approaches may not signal). The literature exploring the possibility that the Schr\"odinger--Newton equation introduces enough non-linearity so that it accounts for collapse in a way similar to the Ghirardi-Rimini-Weber approach \cite{GRW86} is very rich, see for example \cite{diosi1984gravitationQMlocalization,penrose1996gravityQuantumStateReduction}. The approach presented here is different, in the sense that it tries to account for the collapse with the minimal possible departure from the unitary evolution, or at least from a continuous, albeit non-unitary or even non-linear evolution.

The idea that unitary evolution is not broken is central also in the \emph{many worlds interpretation} \cite{Eve57,Eve73,dWEG73,Vaidman2002MWI}, enhanced with the proposal that \emph{decoherence} can resolve the measurement problem \cite{Zeh96,Zurek2003Decoherence,schlosshauer2005decoherence} (although there are some serious objections to this proposal \cite{pessoa1997canDecoherence,leggett2002limitsQM,kastner2014einselection}). But while in these approaches the unitary evolution of Schr\"odinger's equation is maintained at the multiverse level, where all branches are included, the collapse is still present at the level of a branch. Here I am interested whether it is possible to maintain unitary evolution in a single world, or at the branch level.

Given that in standard QM, briefly described in section \S\ref{s:standardQM}, the statistical interpretation of the wavefunction given by Born is confirmed by observations, it is the correct description for all practical purposes. And this description suggests that quantum measurement leads to a discontinuous wavefunction collapse. However, it is still possible that the wavefunction which is inferred from the measurements is not the same as the real wavefunction, and this is one of the central themes of this article.
The reasons which lead to enough flexibility to allow for this possibility are the following:
\begin{enumerate}
	\item 
One cannot measure directly the wavefunction. What the measurements tell us is that the quantum state of the observed system is an eigenstate of the observable.
	\item 
Even this information is subject to inherent limitations given by the \emph{error-disturbance uncertainty relations} \cite{Heisenberg1927Uncertainty,Ozawa2003Heisenberg,Ozawa2013disprovingHeisenbergErrorDisturbance}. 
\end{enumerate}

According to the error-disturbance uncertainty relations, the more precise a measurement is, the more it disturbs the observed state. This means that the collection of results of measurements can only give an approximation of the state of the wavefunction. Given that the constraints imposed by the measurements to the wavefunction are more relaxed than it is usually assumed, a question becomes justified:

\begin{quote}
Is it possible that the real wavefunction can fit the observations provided by measurements, without actually having to collapse in a discontinuous way?
\end{quote}

In standard QM, the wavefunction represents our knowledge about the observed system, and the probabilities of the possible outcomes of future measurements, so let us call the wavefunction representing probabilities \emph{epistemic wavefunction}. What I propose is that one should consider the possibility that there is also a real, \emph{ontic wavefunction}, which evolves continuously even during the collapse, and which is merely approximated by the epistemic wavefunction. 

Our measurements give us the state representing the ontic wavefunction within the limits of error and disturbance. This entails a difference between the real, ontic wavefunction, and our statistical knowledge about it, represented by the epistemic wavefunction. I argue that the collapse we observe takes place only at the epistemic level, while it is still possible that the real wavefunction evolves continuously, following the Schr\"odinger equation, or at least a modified, perhaps non-unitary or even non-linear version.

This proposal is in line with other proposals that there are entities which represent ``things'', ``beables''. This idea is pursued for example in the \emph{de Broglie--Bohm theory} \cite{deBroglie1926OndesEtMouvements,Bohm52,DGZ1996BohmianMechanics,DGZ2012QPhysWithoutQPhil} and other \emph{hidden-variable theories}, and, in a unitary version, in 't Hooft's approach based on cellular automata \cite{tHooft2014CellularAutomatonInterpretationQM}. But the departure of these proposals from Schr\"odinger's equation is significant.
Here I will try to obtain a description still based on wavefunctions, and hopefully still governed by the Schr\"odinger equation. For example the atom is a ``thing'', which contains electrons whose states are very well described by Schr\"odinger's equation, or at least an approximation of it. Schr\"odinger himself originally saw the wavefunctions as real entities, but because entanglement makes them unlike fields or any other classical entities, he did not continue to pursue this possibility.

I will take into account major difficulties encountered by the proposal of a wavefunction which describes reality and not merely probabilities, and see if it survives at the end. If we can obtain a consistent picture, we will be entitled to call this wavefunction \emph{ontic} (and still keep the epistemic, probabilistic approximation, which is the only one we can access by quantum measurements).

I expect that the information obtained from measurements, encoded in the epistemic wavefunction, describes to some degree also the ontic wavefunction. However, if we assume that the measurement also tells the ontic state of the observed system, then the conflict between dynamics and measurement can only be resolved by admitting a discontinuous collapse, either of the kind in the standard QM, or a \emph{spontaneous collapse}, as in the GRW theory \cite{GRW86,Diosi1989UniversalReduction}.

The tension between measurements and unitary evolution seems to lead with necessity to the collapse, so if in reality there is no discontinuous collapse, this can only be achieved if either the measurement or the dynamics is more flexible than we thought (or both). Let us first verify if our assumptions about measurement are true, and only change the dynamics if needed.

The purpose of this exploration is to find the possible conditions that any unitary approach to QM should satisfy, and see whether this possibility is still consistent.

\section{The tension between quantum measurement and the initial conditions of the observed system}
\label{s:qm_init_cond}

It was clear since the dawn of quantum mechanics, especially with von Neumann's formulation \cite{vonNeumann1955foundations}, that the state of the observed system appears to be in general a superposition of the possible results of a measurement, yet at the end of the measurement, the state turns out to be one of these possibilities. This seems to require a projection of the state in an eigenstate of the observable. However, it was suggested that by taking into account the environment, which includes the measurement apparatus, the evolution is still unitary. This idea was developed in the \textit{decoherence program} \cite{Zeh96,Zur98,Zur03a}. Indeed, by accounting for the environment, the density matrix decoheres, so that the off-diagonal terms vanish. The diagonal terms are then interpreted as a statistical ensemble, so we cannot actually claim that the evolution is unitary, because evolving a pure state into a mixture means collapse. In such approaches, unitarity exists only when all decohered histories are taken into account.

In fact, any kind of attempt to give a purely unitary description of the wavefunction collapse at the branch level can work only for very special initial conditions of the observed system and the measurement apparatus. In \cite{Sto12QMb} it was proven that in order to get a unitary description of the measurement process, the initial conditions of the observed system and those of the measurement apparatus have to belong to a set of zero measure of the full Hilbert space. The fact that unitary evolution is compatible with measurement only for special initial initial conditions, requiring therefore a fine tuning, seems to endanger the principle of causality. I will address this delicate problem in  the sections \S\ref{s:delayed_initial_conditions}, \S\ref{s:spacetime_locality} and \S\ref{s:global_consistency}, and provide a more rigorous picture in \S\ref{s:wavefunction-events-picture}.

\section{Propagation of a photon from one place to another}
\label{s:particle_place2place}

Let us start with a simple case, of a photon going from one place to another. The Schr\"odinger equation has to take the wavefunction at the time $t_a$ from the place where it is emitted, and evolve it unitarily to another place, where it is detected at a later time $t_b$. We already see that without a collapse the photon has to have fine tuned initial conditions at $t_a$, so that at $t_b$ is found in a definite place. Let us now see how fine tuned the initial conditions have to be, or if even it is possible for a photon to satisfy both the initial and final conditions.
 
Suppose that a photon is emitted at the position $x_a$ at the time $t_a$, and it is later, at the time $t_b>t_a$, detected at the position $x_b$.
To find the amplitude $\langle{x_b}|{\psi(t_b)}\rangle$, we need to apply the free particle Schr\"odinger equation to the initial state
\begin{equation}
|{\psi(t_a)}\rangle = |{x_a}\rangle
\end{equation}
This makes the momentum completely undetermined, and the photon spreads like a spherical wave, preventing any possibility to be described by a wavefunction evolving unitarily from one point to another. But this is an idealization, because the photon is never emitted from a point-like source. For example, if it is emitted by an atom, the wavefunction $|{\psi(t_a)}\rangle$ is of the size of an atom. But even the wavefunctions of the electrons in the Hydrogen atom extend radially in the entire space. The amplitudes decay exponentially with the distance, but still they do not vanish.
Because of this reason, both the emission and the detection of the photons are wrongfully represented as taking place in a definite position in standard QM. To be realistic we have to admit that what we call the position of emission or detection are actually some average positions, and the true state of a photon when emitted or absorbed is unknown. Using the eigenstates of the position to represent them is an approximation. A better approximation would be a Gaussian function centered at $x_a$ and having the width equal to the radius of the atom.

What if we consider rather than precise locations, more extended regions? Suppose now that immediately after the emission, at the time $t_a$, the photon passes through an extended but bounded region of space $A\subset\mathbb{R}^3$. Given this loosening of the initial condition, could it be possible that the Schr\"odinger equation itself makes the wavefunction of the photon evolve so that at a time $t_b>t_a$ it passes through a bounded region $B\subset\mathbb{R}^3$ (for instance where is detected)? Unfortunately, no matter how large we allow the regions $A$ and $B$ to be, as long as they are bounded, it is still impossible for a free particle to be confined at a time $t_a$ in the region $A$, and later at $t_b$ to region $B$ without breaking the unitary evolution. Suppose that at $t_a$ the \emph{support} of the wavefunction $\psi(t_a)$ is included in $A$, $\operatorname{supp}(|{\psi(t_a)}\rangle)\subseteq A$. Then, its Fourier transform will be an \emph{entire function} (a complex function holomorphic over the entire complex domain), so its support will cover the entire domain of wavelengths. This means that no matter how large is $A$, the wavefunction $|{\psi(t_a)}\rangle$ will be a superposition of plane wavefunctions of almost all possible momenta. So immediately after $t_a$ the wavefunction will spread in the entire space, and there will be no way that at a later time $t_b$, $\operatorname{supp}(|{\psi(t_b)}\rangle)\subseteq B$.

From this point of view, the wavefunction collapse seems like a clean solution to this problem, because it allows the wavefunction extended in the entire space to become suddenly localized in a small region.

Consider now a Hydrogen atom in a water molecule in a glass of water (which is a bounded region). The Hydrogen atom will remain in the glass for long time. If the collapse is the explanation of localization, then in this case the wavefunction of the atom has to collapse all the time, to remain in the glass. An alternative way is to admit that it extends in the entire space, but it is ``more localized'' in a certain position in the glass. This weakening of the condition to be localized at a definite position allows it to remain for long time in the glass, without the need of collapsing all the time.

Gaussian wavepackets, despite saturating the uncertainty relation and remaining Gaussian in the absence of interaction, spread in space. This would make impossible for a small detector on earth to detect without collapse a photon emitted by an atom in a distant galaxy.
A more appropriate solution would require using \emph{nonspreading wavepackets}, so that if they were localized in a place at $t_a$, they will be localized also at $t_b$.
Fortunately, such solutions are known for Dirac, Klein-Gordon and Schr\"odinger equations \cite{courant1966methods,ziolkowski1989space,shaarawi1990novel,Barut1990EequalHW,barut1990quantumParticleLike,hillion1992nondispersive,sheppard2002generalized,zamboni2012soliton}. Moreover, such solutions are even able to reproduce the two-slit interference, and therefore the Born rule for this case \cite{shaarawi1994diffraction}.

This analysis shows that it is not accurate to consider the measurement of position as finding the wavefunction to be precisely localized at a definite position. Rather, ``most of the wavefunction'' is localized in a small region around that position. The epistemic wavefunction is in this case $|{x}\rangle$, because this is what we think we know about the particle detected at $x$, but the ontic wavefunction is something completely different, extended in the entire space, but concentrated around $x$, which would be better approximated by a such a solitonic wavefunction which is ``mostly localized'' around $x$.

Therefore, unitary evolution of photons can accommodate the fact that the photon travels from the place where it is emitted to the place where it is absorbed without breaking unitary evolution. Clearly its wavefunction has to be very special for this, the initial conditions have to be fine tuned to also satisfy the final conditions, as we already know from the discussion in section \S\ref{s:qm_init_cond}.

Note that the notion of ``mostly localized'' does not refer to the probabilities, but to a physical wavefunction whose existence I propose here, which is merely approximated by the measurements. The probabilities apply to our knowledge of the wavefunction, while the localization I am proposing here refers to the physical, ontic wavefunction, whose possibility of existence is explored here.

This only shows that the possibility of this taking place unitarily exists, but it does not explain why the wavefunction takes such a special form. Perhaps a deeper understanding of particles, which still eludes us, will provide an explanation.

\section{The ``unitary collapse'' condition}
\label{s:unitary_collapse_condition}

In order to be rigorous, we would have to define what ``mostly localized'' is. We can define the \emph{degree of localization} of a wavefunction $|{\psi}\rangle$ inside a region $A\subset\mathbb{R}^3$ as
\begin{equation}
\label{eq_degree_localization}
\Lambda_A(|{\psi}\rangle):=\int_{A}\langle{x}|{\psi}\rangle\operatorname{d} x
\end{equation}
Maybe it is more appropriate to use a more elaborate definition, for instance using \emph{standard deviation}. Standard deviation is natural to be used in the case of Gaussian wavepackets, because we can use for example the width of the packet. However, for simplicity we can consider equation \eqref{eq_degree_localization}. Let us fix a value $0<\Lambda \leq 1$ and write
\begin{equation}
\label{eq_localization_region}
|{\psi}\rangle\triangleleft(A,t)
\end{equation}
if $\Lambda_A(|{\psi(t)}\rangle)\geq \Lambda$.
We say that \emph{a particle whose wavefunction is $|{\psi}\rangle$ is in the region $A$ at the time $t$} if $|{\psi}\rangle\triangleleft(A,t)$.

Thus, I propose the following \emph{unitary collapse condition}:
\begin{quote} 
\label{condition:unitary_collapse}
\emph{In the real world, the wavefunction evolves unitarily so that at the times $t_a$ and $t_b$ it passes through the regions $A$ and $B$.}
\end{quote}

In other words, $|{\psi}\rangle$ has to simultaneously satisfy the following three conditions:
\begin{enumerate}
	\item 
$|{\psi}\rangle\triangleleft(A,t_a)$,
	\item 
$|{\psi}\rangle\triangleleft(B,t_b)$, and 
	\item 
$|{\psi(t_b)}\rangle=\hat U(t_b,t_a)|{\psi(t_a)}\rangle$.
\end{enumerate}

This condition does not contradict the Schr\"odinger equation, and in fact proposes that it remains true even in the cases when we can only think that it is violated by a discontinuous collapse. For this to be true, it is necessary that events like emission and absorption to be weakly localized, so that there is always a solution of the Schr\"odinger equation which satisfies them.

The ``unitary collapse'' condition can be generalized to more particles, and to more places where the particles have to be found. Also, it can be generalized to conditions that are closed not to a particular eigenstate of the position, but of any other observable. This generalization will be made in section \S\ref{s:wavefunction-events-picture}.

\section{Delayed initial conditions}
\label{s:delayed_initial_conditions}

The dependence on the final conditions seems retrocausal, because the initial conditions of the observed system have to be tuned precisely so that the wavefunction becomes localized when its position is detected. But this should not come as a surprise, because we already know that the state of the wavefunction prior to the measurement is constrained by the experimental setup. This is unavoidable in any interpretation in which the outcomes of the measurements are encoded in one way or another in the initial conditions \cite{Bel66,Bel64,KochenSpecker1967HiddenVariables}. A unitary evolution attempt to describe the measurement makes no exception \cite{Sto12QMb}.

The kind of special initial conditions which allow unitary evolution to be compatible with measurements, proven to be required in \cite{Sto12QMb} and used in section \S\ref{s:particle_place2place}, can be interpreted as \emph{superdeterminism} (see for example \cite{hooft2011wave}), or \emph{retrocausality}. This is a delicate problem, because seems to be a threat to the principle of causality. In the following, I will discuss some proposals, and argue that it will not lead to breaking of causality.

This apparently retrocausal feature of quantum mechanics is actually often encountered and discussed in the literature. It is at the origin of the \emph{transactional} \cite{cramer1986transactional,cramer1988overview} and the \emph{time symmetric} \cite{aharonov1964time,aharonov1988result,aharonov1991complete,aharonov2007newinsights,aharonov2007TSV} approaches to QM. Several proposals to deal with this issue are known, for example \cite{deBeauregard1953-DEBMQ,Rietdijk1978retroactiveInfluence,price2008toyRetrocausality,price2015disentangling}.

Here I will argue that the apparent retrocausality can exist in the proposed model without breaking the principle of causality. I will discuss two equivalent pictures, one which is temporal, and another one which is timeless, based on the block view.

The temporal interpretation of the apparent retrocausality is based on \emph{delayed initial conditions} \cite{Sto08f,Sto12QMc,Sto13bSpringer}, in the following sense. The initial conditions of a classical system are usually not restricted -- the system can start in any initial state, and the dynamics will work without problems. The initial state can be measured so that we get the complete information. However, the initial conditions of a quantum system can be seen as not determined until the complete information can be extracted from measurements. The complete information is hidden by the very principles of QM. Therefore, the information usually contained in the initial conditions is distributed in spacetime at the various places where quantum measurements and observations are performed, and no matter how many measurements we perform, we will never find the complete wavefunction of the world. What we can have is a set of possible solutions of the Schr\"odinger equation, which satisfy the observations in a global and self-consistent manner. This set of possible solutions is reduced in time, as new measurements are performed, so that in time we accumulate more and more knowledge about the quantum state. This is not merely a collection of information about the quantum state, because different choices of the observables lead to different possible solutions.
This picture does not violate causality, because it cannot be used to change the past already recorded by observations. The reason is that after each observation we keep only the solutions compatible with the outcome of that observation. So no contradiction is allowed. Of course, the big question is whether the set of solution satisfying all constraints due to observations is always non-empty, no matter how many observations we make. This problem is addressed in this article, by using the fact that there is a trade-off between error and disturbance. In particular, this point is addressed in section \S\ref{s:measurements-as-events} for a specific example.

The timeless picture, which is equivalent to the delayed initial conditions picture, will be discussed in the following.

\section{Spacetime locality}
\label{s:spacetime_locality}

Consider Bohm's version of the Einstein-Podolsky-Rosen experiment (EPR-B) \cite{EPR35,Bel64}. This version involves the entanglement of the spin states of two particles. The analysis of a way by which the EPR-B experiment can take place by unitary evolution was discussed in \cite{Sto08b,Sto08f,Sto12QMc}.

The EPR-B experiment involves the decay of a composite particle which is in a singlet state $\frac {1} {\sqrt{2}}\left(|{\uparrow}_A\rangle|{\downarrow}_B\rangle - |{\downarrow}_A\rangle|{\uparrow}_B\rangle\right)$. The two particles labeled by $A$ and $B$ resulting from the decay arrive at Alice and Bob. Alice measures the spin of the particle $A$ along a direction in space, and Bob measures the spin of $B$. Because both of them find definite and separate outcomes for their experiments, it follows that if unitary evolution is maintained, the two particles arrived at them as separate states \cite{Sto08b}. If we apply backwards in time the evolution equation, we can conclude that after the decay the particles had separate states. This means that between the decay and the measurement both particles behaved locally. Therefore, the correlations between the values obtained by Alice and Bob are enforced locally, through the histories of the two particles. We find again that the states of the particles immediately after the emission had to be fine tuned so that Alice and Bob find them in the correct states.

An experiment verifying what happens in the EPR-B experiment with the weak values of the spin between the emission and the detection of the particles was explored in \cite{aharonov2012future-past}. The conclusion of the article was that 

\begin{quote}
what appears to be nonlocal in \emph{space} turns out to be perfectly local in \emph{spacetime}.
\end{quote}

Although in \cite{aharonov2012future-past} the result is interpreted in terms of the two-state vector formalism (see section \S\ref{s:time_symmetry}), it is also consistent with other interpretations \cite{deBeauregard1953-DEBMQ,Rietdijk1978retroactiveInfluence,price2015disentangling}. In addition, it supports the proposal of this article, that the evolution between the emission and the detection is unitary, and there is no discontinuous collapse.
We can consider the processes taking place during the EPR-B experiment as being local in the sense that the particles are described by local solutions of the Schr\"odinger equation. This kind of spacetime locality is not what we usually expect when we speak about locality, because it depends on the final conditions imposed by the experimental setup. The solutions are local in the sense that they obey partial differential equations on spacetime, but they are also subject to boundary conditions which are global and impose the apparent (space) nonlocality like that from Bell's theorem.

It is normally considered that the Schr\"odinger equation predicts that the particles are entangled after the decay. However, by requiring the solution to satisfy the final conditions, the particles turn out to be separated right after the decay. Since the entire past history of the particles has to satisfy the final conditions, the projection has to be done to the entire life span of the particles, that is, it applies to the past history. This kind of projection which applies to the entire history does not introduce a discontinuity or a violation of the Schr\"odinger equation.

\section{Global consistency condition}
\label{s:global_consistency}

The solutions satisfying both the initial conditions and the final ones are local, as solutions of the Schr\"odinger equation, but they are also subject to global constraints, given by the initial and final constraints, resulting from the preparation and the measurement.

The idea to impose global constraints to local solution is not unprecedented: Schr\"odinger derived the discrete energy spectrum of the electron in the atom by imposing boundary conditions on the sphere at infinity \cite{Sch26}. So, the solutions are local, but among all local solutions, we accept as physical the ones that are consistent everywhere, including at infinity. More generally, they also have to remain consistent in the future. Such conditions are imposed by future measurements, so in order to ensure consistency, we keep only the solutions of the Schr\"odinger equation which are consistent and remain consistent anywhere in spacetime.

The consistency of the solutions with future measurements implies that the state of the system before measurement depends on the observables we will choose to measure in the future \cite{Sto12QMb}, and this has the unpleasant appearance of a conspiracy. This can be interpreted in a less striking way, if we appeal to the \emph{block world} view. The block world view of the universe is mostly known from Einstein's relativity, but it is also useful in Galilean relativity. If we consider the solutions not given by complete initial conditions at some point in the past, but as a combination of delayed initial conditions imposed at various points in spacetime, then the block world view provides a more natural picture \cite{Sto12QMa,Sto12QMc,Sto13bSpringer}.

Other proposals in the same spirit are known, see for example the toy model of using the block world view in \cite{price2008toyRetrocausality}. The resemblance with the toy model proposed in \cite{price2008toyRetrocausality} consists in the fact that both proposals require consistency between conditions imposed at different places and times, in a block world. The difference is that, while the toy model is a simple graph (nevertheless having the desired features of retrocausality), the model proposed in this article is quantum, and is governed by the Schr\"odinger equation, with minimal differences from standard QM (namely, the unitary account of the apparent collapse).

Another way to see this block world picture is as a sheaf of local solutions, which can be combined only in certain ways to obtain a globally consistent solution \cite{Sto12QMc}. Thus, quantum reality is like a puzzle which can be solved only in consistent ways \cite{Sto13bSpringer}.

Even if the block world view may be satisfactory for some aspects of the problem, when we think of the same phenomena in terms of time evolution, the conspiracy and the apparent retrocausality return.
In sections \S\ref{s:wavefunction-events-picture}--\S\ref{s:history_collapse} I will present a more rigorous picture which will show that this does not imply a violation of causality, because it does not change the past, it only determines the parts of the past which were not already determined.

\section{The \emph{wavefunction-events} picture}
\label{s:wavefunction-events-picture}

We are in position to provide a picture of a system or of the Universe, exclusively based on the constraints imposed to the wavefunction by various \emph{events} like emission, detection in a particular (always approximate) position or eigenstate of another observable, passing through slits \textit{etc}.

We denote by $\mathcal M$ the spacetime. It will be useful to define on $\mathcal M$ a time coordinate $t:\mathcal M\to\mathbb{R}$ which foliates it in spacelike surfaces of constant time, $\mathcal M=T\times S=\bigcup_{t\in T} S_t$, where $T$ is an interval in $\mathbb{R}$, and $S_t=\{t\}\times S$, $S$ being the physical space. The following can be applied also to relativistic theories, because it will not break the Lorentz invariance.

Let $\mathcal{H}$ be the total Hilbert space of the universe, which may contain the Fock spaces of all particles, or any suitable space needed to represent the entire universe and all interactions.
We assume that the Schr\"odinger equation
\begin{equation}
\label{schrod_total}
\imath\hbar\frac{\partial}{\partial t}\psi(t)=\hat H \psi(t)
\end{equation}
has solutions of the form $\psi:T\to\mathcal{H}$. We denote the space of solutions of equation \eqref{schrod_total} by
$\mathscr{H}$. 
We denote by $\mathbb P(\mathscr{H})$ the space of solutions viewed as time evolving rays in the Hilbert space $\mathcal{H}$, or time evolving elements of the projective Hilbert space $\mathbb P(\mathcal{H})$.

\begin{definition}
\label{def:event_temporal}
An \emph{event} is a pair $\varepsilon=(t,s)$, where $t\in T$ is a moment of time, and $s$ is a subset of the \emph{projective Hilbert space} $\mathbb P(\mathcal{H})$ of the total system.
\end{definition}

Equivalently, we can take $s$ to be a subset of the Hilbert space of the total system $s\subset\mathcal{H}$, which satisfies the condition that for any $|{\psi}\rangle\in\mathcal{H}$ and any $\alpha\in\mathbb{C}\setminus\{0\}$, $|{\psi}\rangle\in s\Leftrightarrow\alpha|{\psi}\rangle\in s$. This condition just ensures that the set $s$ contains rays from the Hilbert space. 

\begin{example}
\label{ex:event_localized}
For instance, the event that the wavefunction of a particle $|{\psi}\rangle$ is localized at the time $t$ in the region $A$, hence satisfies \eqref{eq_localization_region}, is $\varepsilon=(t,s)$, where $s$ is the set $s=\{|{\psi}\rangle\in\mathcal{H}_1||{\psi}\rangle\triangleleft(A,t)\}\otimes\mathcal{H}_2$, where $\mathcal{H}_1$ is the Hilbert space of the particle, and $\mathcal{H}_2$ is the Hilbert space of the rest of the universe, hence $\mathcal{H}=\mathcal{H}_1\otimes\mathcal{H}_2$.
\end{example}

\begin{definition}
We call an event as in the example \ref{ex:event_localized} a \emph{spacelike event}. Here, the term ``spacelike'' reflects the fact that the points in the region $A$ have the same time, at least in a reference frame, similarly to the case in relativity.
\end{definition}

We see from Example \ref{ex:event_localized} that although the definition of the event \ref{def:event_temporal} refers to the entire Hilbert space, the event itself can be about any subsystem. In this example it was about localization in space, hence around a position eigenstate, but it can be as well about the localization around any state vector, which is an eigenstate of another observable.

We could have taken the definition of an event to be such that $s$ is a Hilbert subspace. This would have allowed us to use projectors. There is a reason why I prefer a general subset and not a Hilbert subspace: not any condition can be expressed by a projector, or a Hilbert subspace. For instance, the set of functions satisfying the condition \eqref{eq_localization_region} does not form a vector space. Hence, using Hilbert spaces instead of sets in Definition \ref{def:event_temporal} would restrict too much the conditions, making the notion of event too simple and unrealistic.

Any position measurement of a particle happens around a particular position and time, but in general that region is extended in spacetime. 
The event that a photon passes through a slit cannot be a spacelike event as in Example \ref{ex:event_localized}, because the slit is also extended in time, and the photon can pass through the slit at various times. For this reason, we need to consider regions $A\subset\mathcal M$ which may also extend in time. But a subset $s$ from an event $\varepsilon=(t,s)$ being a general subset of the projective Hilbert space, allows this situation too. This is because the condition $|{\psi(t)}\rangle\in s$ is equivalent to the condition $|{\psi(t')}\rangle\in \hat U(t',t)s$. In fact, the condition can be seen as selecting a subset of solutions defined for all times, $\psi\in\epsilon$, where $\epsilon\subset\mathscr{H}$ is a subset of the space of solutions $\mathscr{H}$, and not of the Hilbert space $\mathcal{H}$ at a particular time $t$.
This justifies the alternative formulation of Definition \ref{def:event_temporal}:

\begin{definition}
\label{def:event_timeless}
A \emph{timeless event} is a subset $\epsilon\subset\mathbb P(\mathscr{H})$. 
\end{definition}

For any $t$ the timeless events of the form $\epsilon\subset\mathbb P(\mathscr{H})$ are in one-to-one correspondence with the events of the form $\varepsilon=(t,s)$. When time has to be contained explicitly, we will use events of the form $\varepsilon=(t,s)$ as in Definition \ref{def:event_temporal}. This will be in most cases, because the events will be ordered in time. But in reality they can as well be defined in a time independent way, as subsets $\epsilon\subset\mathbb P(\mathscr{H})$, as in Definition \ref{def:event_timeless}.
Let us call the way to represent events from Definition \ref{def:event_temporal} \emph{temporal picture}, and that from Definition \ref{def:event_timeless} the \emph{timeless} picture of events.

At every time there is a collection of possible wavefunctions satisfying all events that already happened, just by the unitary evolution given by the Schr\"odinger equation. Every new event, say every new measurement, only eliminates some of these solutions of the Schr\"odinger equation, but this elimination applies on the entire time range for each solution. By this, it reduces the set of possible solutions from $\mathscr{H}$ to the subset of solutions satisfying also the new events, and giving the appearance of a collapse.

A \emph{registry of events} is a collection of events
\begin{equation}
\mathcal E\subset 2^{\mathbb P(\mathscr{H})}
\end{equation}
or, if the time is specified for each event,
\begin{equation}
\mathcal E\subset T\times 2^{\mathbb P(\mathcal{H})}
\end{equation}
where $2^X$ is the usual notation for the collection of all subsets of a set $X$.

We denote by $\mathscr{H}(\mathcal E)$ the set of the solutions of the Schr\"odinger equation satisfying the registry of events $\mathcal E$.
It is trivial to see that the set of solutions from $\mathscr{H}$ satisfying the events in a registry $\mathcal E$ is,
in terms of events of the form $(t,s)$, is given for any chosen initial time $t_0$ by
\begin{equation}
\label{eq:registry_temporal}
\mathscr{H}(\mathcal E) = \left\{\psi\in\mathscr{H}\left||{\psi(t_0)}\rangle\in\bigcap_{(t,s)\in\mathcal E}\hat U(t_0,t)s\right\}\right.
\end{equation}
or simply
\begin{equation}
\label{eq:registry_timeless}
\mathscr{H}(\mathcal E)=\bigcap_{\epsilon\in\mathcal E}\epsilon
\end{equation}

In most reasonable cases the condition defining an event is about localization in space: a particle was emitted or absorbed or passed through a certain region of spacetime. Also, in the majority of situations the region can be approximated by a spacelike region, hence the event can be a spacelike event as in Example \ref{ex:event_localized}. It is not mandatory for the region $A$ to be connected. If $A$ is the union of two or more connected components, then the particle has two or more alternatives, and it will use all of them. This is the case of the two-slit experiment: the two slits can be seen as two connected components of one single region $A$.

While both of the definitions of events \ref{def:event_temporal} and \ref{def:event_timeless} refer to the Hilbert space of the entire world, the notion of event can be also about subsystems or particles, as already pointed out in Example \ref{ex:event_localized}. Simply take $s$ to be of the form $\mathbb P( s_1\otimes\mathcal{H}_2)$, where $s_1\in\mathcal{H}_1$ is a subset of the Hilbert space $\mathcal{H}_1$ of a subsystem, and $\mathcal{H}_2$ is the Hilbert space of the rest of the world. This allows us to refer when defining an event specifically to subsystems. It works well also for entangled systems.

What about interactions involving some particles as input, and some as output? Such an event can be described in terms of spacelike events. For example, if the region $A$ is bounded by the times $t_1$ and $t_2$, we can describe it by spacelike events at $t_1$ ensuring that the input particles enter the region, and other spacelike events at $t_2$, for the output particles, which ensure that they leave it. We can also add the condition that a particle was annihilated in the region $A$ and therefore does not leave it, or that was created and therefore is not among the input particles, simply by imposing that it does not pass through $S_{t_2}$, respectively $S_{t_1}$.

We can thus use events to describe and construct all sorts of quantum phenomena and experiments.

\section{The history and the wavefunction collapse}
\label{s:history_collapse}

We derive some properties.

\begin{lemma}
\label{thm:properties_events}
For two registries of events $\mathcal E_1$ and $\mathcal E_2$,
\begin{enumerate}
	\item 
$\mathscr{H}(\mathcal E_1)\cap \mathscr{H}(\mathcal E_2)=\mathscr{H}(\mathcal E_1\cup \mathcal E_2),$
	\item 
$\mathscr{H}(\mathcal E_1) \cup \mathscr{H}(\mathcal E_2)\subset\mathscr{H}(\mathcal E_1\cap \mathcal E_2),$
	\item 
If $\mathcal E_1\subseteq \mathcal E_2$, then $\mathscr{H}(\mathcal E_2)\subseteq \mathscr{H}(\mathcal E_1)$.
\end{enumerate}
\end{lemma}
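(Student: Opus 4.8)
The plan is to work entirely in the timeless picture, where by \eqref{eq:registry_timeless} a registry $\mathcal E$ determines the set of admissible solutions simply as the intersection $\mathscr{H}(\mathcal E)=\bigcap_{\epsilon\in\mathcal E}\epsilon$ of the events it contains. All three statements are then instances of elementary facts about intersections of a family of subsets of $\mathbb P(\mathscr{H})$ indexed by the registry, so the whole proof is set-theoretic bookkeeping. One could equally run the argument in the temporal picture via \eqref{eq:registry_temporal} for a fixed choice of $t_0$; since $\hat U(t_0,t)$ is a bijection the set $\hat U(t_0,t)s$ is well defined and the manipulations are identical. I would also note in passing that the two formulas \eqref{eq:registry_temporal} and \eqref{eq:registry_timeless} agree and that \eqref{eq:registry_temporal} is independent of $t_0$, as already observed when the timeless picture was introduced, so the proof may be read off either formula.

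For item (1), I would simply unfold both sides: $\psi\in\mathscr{H}(\mathcal E_1)\cap\mathscr{H}(\mathcal E_2)$ means $\psi\in\epsilon$ for every $\epsilon\in\mathcal E_1$ and for every $\epsilon\in\mathcal E_2$, which is exactly the condition that $\psi\in\epsilon$ for every $\epsilon\in\mathcal E_1\cup\mathcal E_2$, i.e. $\psi\in\mathscr{H}(\mathcal E_1\cup\mathcal E_2)$. This is nothing but the identity $\big(\bigcap_{\epsilon\in\mathcal E_1}\epsilon\big)\cap\big(\bigcap_{\epsilon\in\mathcal E_2}\epsilon\big)=\bigcap_{\epsilon\in\mathcal E_1\cup\mathcal E_2}\epsilon$.

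Item (3) I would then obtain as a corollary of (1): if $\mathcal E_1\subseteq\mathcal E_2$ then $\mathcal E_1\cup\mathcal E_2=\mathcal E_2$, hence $\mathscr{H}(\mathcal E_2)=\mathscr{H}(\mathcal E_1\cup\mathcal E_2)=\mathscr{H}(\mathcal E_1)\cap\mathscr{H}(\mathcal E_2)\subseteq\mathscr{H}(\mathcal E_1)$; equivalently, intersecting over a larger index family can only shrink the result. Finally item (2) follows from (3) applied to the two inclusions $\mathcal E_1\cap\mathcal E_2\subseteq\mathcal E_1$ and $\mathcal E_1\cap\mathcal E_2\subseteq\mathcal E_2$, which give $\mathscr{H}(\mathcal E_1)\subseteq\mathscr{H}(\mathcal E_1\cap\mathcal E_2)$ and $\mathscr{H}(\mathcal E_2)\subseteq\mathscr{H}(\mathcal E_1\cap\mathcal E_2)$; taking the union of the left-hand sides yields the claim. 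I would add the remark that the inclusion in (2) is in general strict, since a solution may satisfy every event common to both registries while violating some event lying only in $\mathcal E_1$ together with some other event lying only in $\mathcal E_2$.

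There is no real obstacle here: the statement is a structural triviality once \eqref{eq:registry_timeless} is available, and the reason it is worth recording is conceptual (it is what makes each new measurement act by \emph{eliminating} solutions, i.e. by a monotone refinement of $\mathscr{H}(\mathcal E)$) rather than technical. The only point that calls for a shred of care is not overstating (2) as an equality, and — if one wants to be thorough — confirming the already-asserted equivalence of the temporal and timeless descriptions so the argument is valid in either formulation.
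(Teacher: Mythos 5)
Your proof is correct, and it rests on the same foundation as the paper's: unfold the definition $\mathscr{H}(\mathcal E)=\bigcap_{\epsilon\in\mathcal E}\epsilon$ from \eqref{eq:registry_timeless} and reduce everything to elementary set theory; your item (1) and your derivation of (3) from (1) (via $\mathcal E_1\cup\mathcal E_2=\mathcal E_2$) match the paper almost verbatim, the paper offering the equivalent decomposition $\mathcal E_2=\mathcal E_1\cup(\mathcal E_2\setminus\mathcal E_1)$. The one genuine difference is item (2): the paper proves it directly, rewriting $\mathscr{H}(\mathcal E_1)\cup\mathscr{H}(\mathcal E_2)$ by the distributive law as the doubly indexed intersection $\bigcap_{\epsilon_1\in\mathcal E_1,\,\epsilon_2\in\mathcal E_2}(\epsilon_1\cup\epsilon_2)$ and then enlarging it by shrinking the index family down to the diagonal of $\mathcal E_1\cap\mathcal E_2$, whereas you obtain (2) as an immediate corollary of the monotonicity statement (3) applied to $\mathcal E_1\cap\mathcal E_2\subseteq\mathcal E_1$ and $\mathcal E_1\cap\mathcal E_2\subseteq\mathcal E_2$, then taking the union of the two resulting inclusions. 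Your route is shorter and avoids the double-indexed manipulation, at the cost of reordering the items so that (3) is available before (2); the paper's route keeps the three items logically independent of one another. Your added remarks — that the inclusion in (2) is generally strict, and that the temporal formula \eqref{eq:registry_temporal} gives the same result for any $t_0$ because $\hat U(t_0,t)$ is a bijection — are correct and harmless, though the paper does not bother with them.
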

\begin{proof}
We apply Definition \ref{def:event_timeless}, equation \eqref{eq:registry_timeless}, and the properties of operations with sets.
\begin{enumerate}
	\item From equation \eqref{eq:registry_timeless},
\begin{equation}
\begin{array}{l}
\mathscr{H}(\mathcal E_1\cup \mathcal E_2)=\bigcap_{\epsilon\in\mathcal E_1\cup \mathcal E_2}\epsilon \\
= \left(\bigcap_{\epsilon\in\mathcal E_1}\epsilon\right) \cap \left(\bigcap_{\epsilon\in\mathcal E_2}\epsilon\right)=\mathscr{H}(\mathcal E_1)\cap \mathscr{H}(\mathcal E_2)
\end{array}
\end{equation}
	\item From equation \eqref{eq:registry_timeless},
\begin{equation}
\begin{array}{l}
\mathscr{H}(\mathcal E_1) \cup \mathscr{H}(\mathcal E_2) = \left(\bigcap_{\epsilon_1\in\mathcal E_1}\epsilon_1\right) \cup \left(\bigcap_{\epsilon_2\in\mathcal E_2}\epsilon_2\right) \\
= \bigcap_{\epsilon_1\in\mathcal E_1, \epsilon_2\in\mathcal E_2}\left(\epsilon_1\cup \epsilon_2\right) \\
\subseteq \bigcap_{\epsilon_1, \epsilon_2\in\mathcal E_1\cap\mathcal E_2}\left(\epsilon_1\cup \epsilon_2\right) \\
\subseteq \bigcap_{\epsilon\in\mathcal E_1\cap\mathcal E_2}\left(\epsilon \cup \epsilon \right)= \mathscr{H}(\mathcal E_1\cap \mathcal E_2)
\end{array}
\end{equation}
	\item 
If $\mathcal E_1\subseteq \mathcal E_2$, then any solution $\psi\in\mathscr{H}(\mathcal E_2)$ satisfies the events of $\mathcal E_1$, hence $\psi\in\mathscr{H}(\mathcal E_1)$.
Or,
\begin{equation}
\begin{array}{l}
\mathscr{H}(\mathcal E_2) = \mathscr{H}\left(\mathcal E_1\cup (\mathcal E_2\setminus \mathcal E_1)\right) \\
= \mathscr{H}(\mathcal E_1)\cap \mathscr{H}\left(\mathcal E_2\setminus \mathcal E_1\right) \subseteq \mathscr{H}(\mathcal E_1)
\end{array}
\end{equation}
\end{enumerate}
\end{proof}

Consider the spacetime $\mathcal M$, and a registry of events $\mathcal E\subset T\times 2^{\mathbb P(\mathcal{H})}$. From the point of view of a time coordinate, the spacetime $\mathcal M$ is foliated, and the events can be ordered by time. Some of them may be simultaneous with respect to that foliation. For any time $t$, we define the subregistry
\begin{equation}
\mathcal E(t)=\{(t',s)\in\mathcal E|t'\leq t\}
\end{equation}
of events that already passed at the time $t$. For a sequence of times $\ldots < t_{-1} < t_0 < t_1 \ldots$, there is a sequence of event sets
\begin{equation}
\ldots \subseteq \mathcal E_{-1} \subseteq \mathcal E_0 \subseteq \mathcal E_1 \ldots
\end{equation}
where $\mathcal E_i:=\mathcal E(t_i)$, which describes the history of the system. The corresponding sets of solutions of the Schr\"odinger equations also form a sequence (see fig. \ref{Fig1})
\begin{equation}
\ldots \supset \mathscr{H}(\mathcal E_{-1}) \supset \mathscr{H}(\mathcal E_0) \supset \mathscr{H}(\mathcal E_1) \supset \ldots
\end{equation}

\begin{figure}[t!]
\centering
\includegraphics[width=84mm]{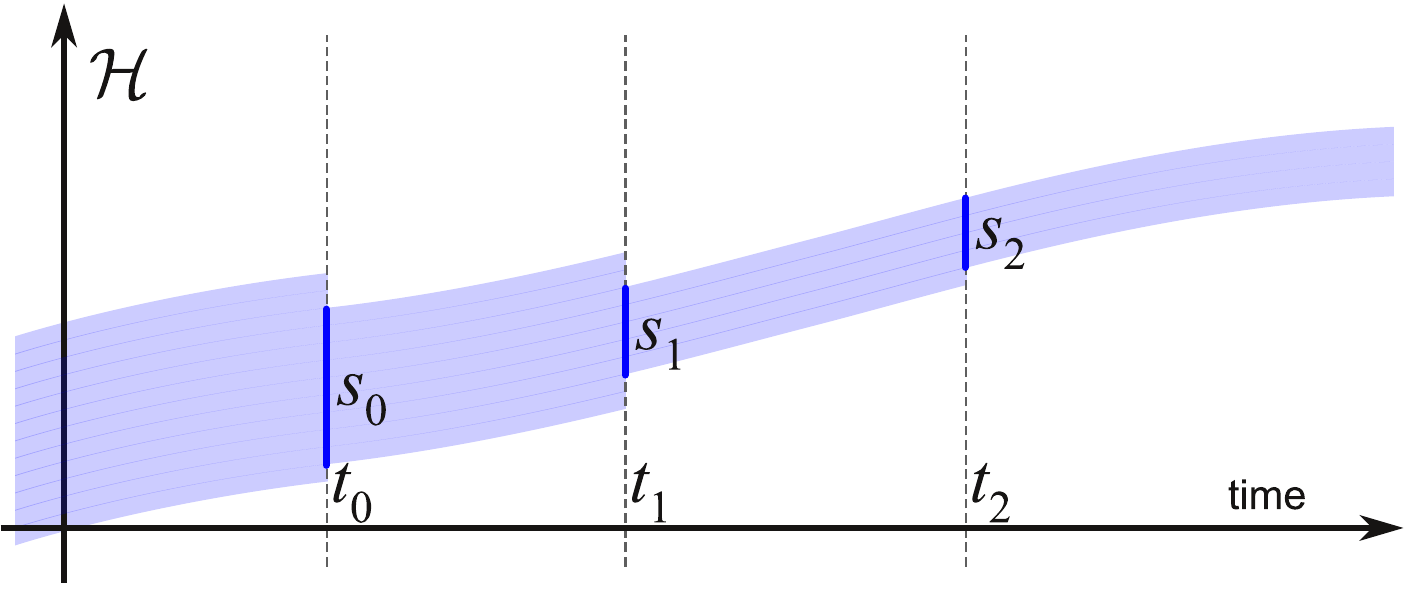}
\caption{\label{Fig1}Delayed initial conditions. Each event reduces the set of solutions compatible with the already happened events.}
\end{figure}

Therefore, after each event the set of solutions satisfying the already passed events is reduced to a subset, similarly to a projection. This takes the appearance of the wavefunction collapse, but the reduction is not real. Rather, we eliminate the solutions that do not correspond to the observations encoded in the events.

Every new event adds new constraints, acting as delayed conditions on the set of admissible solutions of the Schr\"odinger equation.
The global consistency condition is thus satisfied by the reduced set of solutions.

\section{Measurements as events}
\label{s:measurements-as-events}

Is it possible to approximate well enough a history based on unitary evolution interrupted by collapse events (as in von Neumann's formulation) with a history based on unitary evolution only, as in the wavefunction-events picture? In this case, which one is closer to reality and which is an approximation?

Let $\mathcal{H}_1$ be the Hilbert space representing the observed subsystem, and $\mathcal{H}_2$ the Hilbert space of the rest of the world.
A measurement in von Neumann's scheme is accompanied by events saying that certain outcome was obtained. So the possible events corresponding to the measurement of the subsystem represented by the Hilbert space $\mathcal{H}_1$ with the observable $\mathcal O_1$ having as eigenspaces $\mathcal{H}_{1,\mathcal O_1,\lambda}$ for each eigenvalue $\lambda$ are of the form $\varepsilon(t,\mathcal{H}_{1,\mathcal O_1,\lambda}\otimes\mathcal{H}_{2})$.

If we refer only to the subsystem $\mathcal{H}_1$, then two successive incompatible measurements correspond to inconsistent events, in the sense that they cannot be satisfied by the same solution. This requires a discontinuity, so that we have a solution which satisfies the first event, and another one satisfying the second one. Hence the collapse. If we include the ``environment'' $\mathcal{H}_2$, this may contain interactions which change the first solution into the second in a continuous way, and so that for the entire system the evolution is unitary.
This is possible at least in some cases, as we can see from the example of the particle moving from one place to another discussed in section \S\ref{s:particle_place2place}.
But for this to be possible, the initial conditions of both the observed system $\mathcal{H}_1$ and the rest of the world $\mathcal{H}_2$ have to be very special \cite{Sto12QMb}.
Equally important, error is necessary to allow two successive measurements to be compatible.
The example of the photon moving between two locations discussed in section \S\ref{s:particle_place2place} has both special initial conditions, and error, because it is never truly an eigenstate of the position operator.

Can this work for other observables too? Consider for example a particle of spin $\frac 1 2$. Measuring the spin at $t_a$ along the $x$ axis, represented by the observable denoted by $S_x$, can result in two possible outcomes, $|{\uparrow_x}\rangle$ and $|{\downarrow_x}\rangle$. Suppose that the outcome is $|{\uparrow_x}\rangle$. If the particle evolves freely, a subsequent measurement at $t_b>t_a$ along the $z$ direction results in the two possible outcomes $|{\uparrow_z}\rangle$ and $|{\downarrow_z}\rangle$, each of them with probability $\frac 1 2$, since $|{\uparrow_x}\rangle = \frac{\sqrt 2}2\left(|{\uparrow_z}\rangle+|{\downarrow_z}\rangle\right)$. How is it possible that the particle evolves freely from an eigenstate of the observable $S_x$ to an eigenstate of $S_z$?

In the following I will argue that this may be possible even under the assumption of unitary evolution. First, it is clear that measurements, in particular spin measurements, are subject to errors. The experimental setup is such that the position of the detected particles, from which we can infer the spin, are subject to errors. While they can be separated at will in order to exponentially reduce the overlap, there is still error in the orientation of the magnetic field, and approximation of the alignment of the magnetic moment. Like any measurements, we do not actually detect unequivocally the eigenstates, because real measurements are only approximately projective, and actually correspond to POVMs which allow in fact, with a small but nonzero probability, the occurrence of any possible outcome. In addition, in the case of successive measurements of the spin along different axes, the trade-off between error and disturbance can be such that the conditions imposed by both measurements are satisfied. Also, the magnetic fields of the two Stern-Gerlach devices utilized to measure the spin rotate the spin orientation, and this can be such that the result $|{\uparrow_z}\rangle$ or $|{\downarrow_z}\rangle$ is obtained, even if previously the spin was detected to be along a different axis. Moreover, one should not forget that the measurement device itself is a quantum system, but we do not know its complete quantum state. This means that its initial conditions are not completely fixed by our observations, and they introduce some freedom, which may allow it to interact with the observed system such that it disturbs it to lead it into one of the possible eigenstates \cite{Sto08b,Sto12QMc}. All these factors provide enough freedom from the constraints, so that the possibility that unitary evolution is compatible with the outcomes of two successive and non-commuting measurements cannot be easily excluded. To completely reject the possibility of unitary evolution, one should perform successive spin measurements in which we eliminate all possible loopholes which can lead to the necessary disturbance which allows it.
The necessity of special initial conditions of the observed system and the measurement device is visible if we consider the possibility to delay the choice of the second observable, for instance by randomly choosing between the observables $S_x$ and $S_y$. This is because the way the magnetic field of the first Stern-Gerlach device disturbs the observed system after the first measurement has to depend on the orientation of the Stern-Gerlach device performing the second measurement: if the second observable is again $S_x$, then there should not be a disturbance, while if it is $S_z$, the disturbance should be maximal.

For this reason, we should replace the events of the form ``the observed system is in an eigenspace of this observable'' with some more flexible approximations. We would want to have a distance which, when under certain value, tells us that a solution is close enough to a certain subspace of the Hilbert space to be considered the state of the observed system when measured. This approximation should be small enough to be within the experimental error, but large enough to allow, together with the disturbance, for a unitary solution. Up to this point, we do not have a proof that a unitary solution, or at least a continuous one, is always ensured.

\section{Unitary histories}

It is often claimed that the evolution is unitary even during measurements, in the many worlds interpretation (MWI) \cite{Eve57,dWEG73,Vaidman2002MWI}, the \textit{consistent histories} interpretation \cite{Gri84,GH90a,Omn92}, and in the \textit{decoherence program} \cite{Zeh96,Zur98,Zur03a}. In fact, in all these interpretations the unitary evolution is recovered only when considering all the worlds/branches/histories together. At the level of each branch, there is always a collapse. In the decoherence program, the density matrix becomes diagonal, and then it is interpreted as a statistical ensemble, so the measurement reveals that the system was in one of the eigenstates. But if we evolve backwards in time the eigenstate obtained by measurement, we find that the initial state was different than what we considered it to be before diagonalizing the density matrix. Interpreting the diagonalized density matrix as representing a statistical ensemble would solve the problem by unitary evolution, except that, if the chosen observable would have been different, the decomposition of the density matrix as a statistical ensemble would have been completely different. So we have to choose: either admit that there is a discontinuous collapse, or that the initial conditions of the observed system depend on what we will choose to measure \cite{Sto12QMb}. If we stick with unitarity at the level of each branch, we have to admit the solution proposed in this paper.

The solution proposed in section \S\ref{s:wavefunction-events-picture} may be seen as being based on branches that decohere, or worlds that split, but in a different way. Rather than having a unique past that splits in many alternative futures, the split happens for the entire history, as if the past history was precisely the one leading to what we observe in the present. The name \emph{relative state interpretation} may be more appropriate for this interpretation than for the usual MWI. In the wavefunction-events picture, consider a registry $\mathcal E$. A new measurement results in its extension, but the extension depends on the outcome, for example, on the place where a particle was detected. Suppose that the alternatives are described by a collection of events $\epsilon_1$, $\epsilon_2$, \textit{etc}. Then, each alternative event $\epsilon_i$ leads to an alternative extension $\mathcal E_i=\mathcal E\cup\{\epsilon_i\}$. Consequently, the associated set $\mathscr{H}(\mathcal E)$ splits in the alternatives $\mathscr{H}(\mathcal E_i)$. In this sense, the unitary interpretation proposed here can be seen as a many worlds interpretation in which the evolution is actually unitarity for every possible history or branch.

\section{Time symmetry and retrocausality}
\label{s:time_symmetry}

Let us reverse the time in the conditions from section \S\ref{s:unitary_collapse_condition}. We define $|{\psi'(t)}\rangle:=|{\psi(-t)}\rangle$ and $\hat U'(-t_a,-t_b) := \hat U^\dagger(t_b,t_a)$. Then $-t_b < -t_a$, and

\begin{enumerate}
	\item 
The initial constraint becomes
$|{\psi'}\rangle\triangleleft(B,-t_b)$,
	\item 
the final constraint becomes
$|{\psi'}\rangle\triangleleft(A,-t_a)$,
	\item 
$|{\psi'(-t_a)}\rangle=\hat U'(-t_a,-t_b)|{\psi'(-t_b)}\rangle$.
\end{enumerate}

Hence the proposed description is manifestly time symmetric.

To ensure that a wavefunction evolves so that subsequently it becomes localized, its initial conditions have to anticipate the experimental setup from the future \cite{Sto12QMb}. There are other cases where this situation was accepted. For example, the \emph{absorber theory} by Wheeler and Feynman proposed a similar feature in electrodynamics \cite{WheelerFeynman1945AbsorberTheory,WheelerFeynman1949ClassicalElectrodynamicsAbsorber}.
The Lagrangian formulation of QM is also time symmetric, and led to the \emph{sum over histories} approach \cite{Feynman1948SpaceTimeApproachToQM,FeynmanHibbs1965QMAndPathIntegrals}. Another formulation based on Lagrangian, which is also time symmetric, was proposed in \cite{Wharton2007TimeSymmetricQM}.
The \emph{transactional interpretation} of Quantum Mechanics relies on a transaction between past and future \cite{cramer1986transactional,cramer1988overview}.

The \textit{two-state vector formalism} \cite{aharonov1964time,aharonov1988result,aharonov1991complete,aharonov2007newinsights,aharonov2007TSV} also adopts a time symmetric description of quantum mechanics based on a state vector evolving towards the future, and another one towards the past.
In combination with \emph{weak measurements}, this approach turned out to be a powerful tool in identifying and elucidating various quantum paradoxes. In \cite{aharonov2012future-past} is presented a version of the EPR-B experiment which shows how future strong measurements appear to affect the results of weak measurements performed in the past. Moreover, this approach provides important clarifications on the measurement problem and the wavefunction collapse problem, and reveals how time reversibility is attainable, under specific conditions \cite{AharonovCohen2014MeasurementCollapse}.

Another approach based on unitary evolution is the \emph{cellular automaton interpretation of QM}, proposed by 't Hooft \cite{tH07,tHooft2014CellularAutomatonInterpretationQM,Elze2014Action4CellularAutomata}, and also leads to apparent conspiracies between the initial conditions.

A quantum measurement acts like a delayed completion of the initial conditions of the observed system. This appears retrocausal, but cannot be used to change the past, only to decide on the values that were not yet observed, or that were hidden. This is similar to the impossibility to use nonlocality to send signals faster than light.
Basically, each measurement adds a new event, which merely reduces the Hilbert space of the wavefunctions to a subspace. This reduction is, as we have seen, not a change of the solutions, neither a discontinuous collapse, but it is rather similar to an increase in information about the observed systems.

To get a less dramatic picture of this apparent backward causality, we can think at the four-dimensional spacetime as already existing, together with the physical states. The solutions of the Schr\"odinger equation have to be self-consistent not only at a local level, but also globally. This global consistency condition should be imposed also to act in spacetime, not only in space, to remove the inconsistent solutions. The remaining ones appear nonlocal, but this is now just an expression of global consistency \cite{Sto12QMc,Sto13bSpringer}.

\section{Possible implications to quantization of gravity}
\label{s:quantum_gravity}

The \emph{semiclassical Einstein equation} is
\begin{equation}
\label{eq:einstein_semiclassical}
G_{ab} + \Lambda g_{ab} = \frac{8\pi G}{c^4} \langle \hat T_{ab} \rangle
\end{equation}
where $G$ is Newton's constant, $c$ the speed of light, $\Lambda$ the cosmological constant, and $G_{ab}$ Einstein's tensor. The expectation value of the stress-energy tensor $\langle \hat T_{ab} \rangle$ can be taken $\langle \hat T_{ab} \rangle=\langle{\psi}|\hat T_{ab}|{\psi}\rangle$ \cite{Moller1962EnergyMomentumQuantum,Rosenfeld1963QuantizationFields}. Other formulations employ instead of the wavefunction $|{\psi}\rangle$ the density matrix or a $C^\ast$-algebra state \cite{Wal94}.

The main arguments against semi-classical gravity come from the impossibility, or at least difficulty to accommodate the wavefunction collapse with the Einstein equation. If we take into account the backreaction, spacetime curvature has to depend on the way matter is distributed, and conversely. But a collapse would mean a discontinuous change in the curvature, which apparently can be used to send signals faster than light \cite{EppleyHannah1977NecessityQuantizeGravitationalField}. Also, a collapse would break the conservation of the stress-energy tensor. In \cite{PageGeilker1981IndirectEvidenceQM} experiments involving superpositions of macroscopically distinct states, having masses whose gravitational field could be measured, were reported. The gravitational field was found to be correlated only with the eigenstate which was detected. According to the authors, this refuted semiclassical gravity, but in the context of the many worlds interpretation of QM \cite{Eve57,Eve73}. The assumption which was refuted was that if $|{\psi}\rangle$ evolves unitarily in the multiverse, gravity should correspond to the superposition $|{\psi}\rangle$, and not to a particular eigenstate $|{\psi}\rangle_\lambda$ obtained after collapse. Since the gravitational field was found to correspond to one eigenstate and not all states in the superpostion, semiclassical gravity in the context of MWI was refuted.

But if reality is accurately described by a wavefunction which evolves unitarily, or at least continuously, without a discontinuous collapse, then these problems no longer appear, and the semiclassical Einstein equation \eqref{eq:einstein_semiclassical} connects consistently General Relativity and Quantum Mechanics. 
This requires, of course, that in equation \eqref{eq:einstein_semiclassical} the wavefunction $|{\psi}\rangle$ is the ontic, and not an epistemic or statistical one.
Making QM and GR compatible this way does not mean that the world obeys semiclassical gravity, only that, if it is necessary to quantize gravity, it is for other reasons.

\section{Open problems}
\label{s:open}

Here I argued for the possibility to avoid a discontinuous collapse by maintaining unitary evolution, or at least continuity, during the apparent collapse. If such a solution can be proven to be consistent, this would resolve the conflict between collapse and dynamics, but it will also make QM and GR consistent semiclassically. This possibility justifies the research in this direction. However, severe difficulties have to be resolved. The first problem is to find what ontic states correspond to each state obtained from measurement. In other words, for each epistemic state resulting from a measurement, what ontic states can lead to the resulting outcome. For example, in section \S\ref{s:particle_place2place} it was shown that the photon cannot be localized at a point, or even in a compact region, if we want to maintain unitarity. This has to be done so that it can apply to all possible observables.  Maybe this approach will fail already at this step by turning out that it is not flexible enough to describe the apparent collapse. If it works, the next necessary step is to deduce the Born rule from the correspondence between epistemic and ontic states, extending the results obtained in \cite{shaarawi1994diffraction} to all cases. 
Can we find experimental evidence supporting the unitary or at least continuous, rather than the discontinuous collapse based QM?
Can we find rigorous theoretical evidence, for example from the consistency between QM and GR required by semiclassical gravity?
At least we have seen that it is possible to save unitarity, and this possibility worth being explored, for its implications to the foundations of QM and of semiclassical gravity.

\subsection*{Acknowledgement}

I thank the Editors for valuable recommendations which improved the clarity and quality of the article.

\end{document}